\journal{Physics Letters A}
\newtheorem{theorem}{Theorem}
\newdefinition{remark}{Remark}
\begin{document}

\begin{frontmatter}

\title{Topological bulk-edge effects in quantum graph transport}





\author[label1,label2]
{Pavel Exner}
\ead{exner@ujf.cas.cz}
\author[label3]
{Ji\v{r}\'{\i} Lipovsk\'{y}\corref{cor1}}
\ead{jiri.lipovsky@uhk.cz}
\address[label1]{Doppler Institute for Mathematical Physics and Applied Mathematics, Czech Technical University,
B\v rehov{\'a} 7, 11519 Prague, Czechia}
\address[label2]{Department of Theoretical Physics, Nuclear Physics Institute, Czech Academy of Sciences, 25068 \v{R}e\v{z} near Prague, Czechia}
\address[label3]{Department of Physics, Faculty of Science, University of Hradec Kr\'alov\'e, Rokitansk\'eho 62, 500 03 Hradec Kr\'alov\'e, Czechia}

\begin{abstract}
We examine quantum transport in periodic quantum graphs with a vertex coupling non-invariant with respect to time reversal. It is shown that the graph topology may play a decisive role in the conductivity properties illustrating this claim with two examples. In the first, the transport is possible at high energies in the bulk only being suppressed at the sample edges, while in the second one the situation is opposite, the transport is possible at the edge only.
\end{abstract}

\begin{keyword}
quantum graph \sep vertex coupling \sep time-reversal non-invariance \sep topological properties
\MSC[2010] 81Q35 \sep 35J10
\end{keyword}

\end{frontmatter}


\section{Introduction}

Quantum graphs, proposed as a single purpose model in the early days of quantum mechanics \cite{Pa} and then forgotten for decades to be rediscovered in the late 1980s, proved to be an exceptionally fruitful concept; for its history, presentation, and an extensive bibliography we refer to the monograph \cite{BK}. One of the sources of its versatility, to our opinion not fully explored, lays in the possibility to choose from a wide variety of conditions coupling the wavefunctions at the graph vertices. Most often one uses the simplest option, usually, called Kirchhoff coupling, or its simple generalization, but there are other choices which could be of a considerable interest.

One concerns the coupling which could be dubbed a \emph{preferred orientation} one, proposed recently by one of us together with M.~Tater \cite{ETa} in response to an attempt to employ quantum graphs to model the anomalous Hall \cite{SK}. The characteristic property of this coupling is that it \emph{violates the time reversal symmetry}. This non-invariance is most pronounced at a particular energy value (referring to momentum $k=1$, the scale being chosen arbitrarily) when the particles approaching the vertex move to the neighboring edges cyclically. The most surprising property of this coupling bears a topological character: transport properties of such a vertex at high energies are determined by its \emph{degree parity}. This can be seen, in particular, from spectral analysis of various classes of periodic graphs \cite{ETa, BET} as well of finite graphs exhibiting a high symmetry \cite{EL}.

Our aim here is to present another manifestation of this property, this time inspecting graphs which are locally periodic but have a boundary where the vertex parity may differ from that of the graph `bulk'. We analyze two simple graph classes with the opposite properties: in the first case the parity is odd at the graph boundary and even `inside', and \emph{vice versa} in the second model. In the first case transport is possible in the bulk but the wavefunctions are suppressed -- except for particular narrow energy intervals -- at the boundary.

In the second case, on the contrary, the motion is possible at the boundary and the wavefunctions decay with respect to the distance from it. We stress that despite the similarity, this system is \emph{not} a topological insulator in the usual sense \cite{HK} because the corresponding `edge states' can move in \emph{both directions} along the array of edges representing the graph boundary -- cf.~Remark~(c) at the end of Section~\ref{s:brick}.

\section{Description of the model}

To begin with, let us recall briefly a few key facts about quantum graphs. We consider a metric graph with infinitely many edges identified with finite intervals $(0,\ell_j)$ and equip it with a second-order operator acting as the negative second derivative. This will be our model Hamiltonian with the common convention used when the physical constants values are not important. To make it a self-adjoint operator one has to fix the \emph{vertex coupling:} the definition domain will consist of functions with their edge components being elements of the Sobolev space $W^{2,2}(0,\ell_j)$ and satisfying the matching condition $(U_{v}-I)\Psi_v+ i(U_{v}+I)\Psi_v' = 0$ at the vertex $v$, where $i$ is the complex unit, $U_{v}$ is a unitary square $d_v\times d_v$ matrix where $d_v$ is the degree of the vertex $v$, $\:I$ is the $d_v\times d_v$ identity matrix, and finally, $\Psi_v$ is the vector of the function values at the vertex understood as one-sided limits when the vertex is approached from a particular edge, and $\Psi_v'$ is similarly the vector of the (one-sided, outward) derivatives at this vertex. For more details on quantum graphs we refer to the monograph \cite{BK}.

The choice of the vertex coupling leaves a lot of freedom, for a vertex of degree $d_v$ we have a $d_v^2$-parameter family of them and, in general, every one defines a different physics. In this paper we consider the coupling with a \emph{preferred orientation} introduced in \cite {ETa} mentioned in the introduction. The matrix appearing in the matching conditions is in this case
\begin{equation}\label{matU}
  U_v = \begin{pmatrix}0 & 1 & 0 & 0 & \cdots & 0 & 0\\ 0 & 0 & 1 & 0 & \cdots & 0 & 0\\ 0 & 0 & 0 & 1 & \cdots & 0 & 0\\ \vdots & \vdots & \vdots & \vdots & \ddots & \vdots & \vdots\\ 0 & 0 & 0 & 0 & \cdots & 0 & 1\\ 1 & 0 & 0 & 0 & \cdots & 0 & 0\\ \end{pmatrix}\,.
\end{equation}
For a particular energy value referring here to the momentum $k = 1$ the wave approaching the vertex over a given edge is fully transmitted to the neighbouring one, conventionally in the counter-clockwise sense. The most remarkable property of this coupling is the dependence of its high-energy transport properties on the topology, specifically on the vertex degree parity as we have recalled above.

As in \cite{ETa, BET} we are here concerned with the transport properties of infinite periodic graphs, this time in the form of a strip cut from an infinite two-dimensional lattice. Our main observation is that the vertex degrees at the boundary of such a strip typically differ from those of the original lattice which in combination with the above mentioned asymptotic behavior may make the transport different in the `bulk' of such a sample and at its `edge'.

We are going to illustrate this claim on two types of lattices from which we cut in both cases a `vertical' strip. In the first case we start from a rectangular lattice of the edges lengths $\ell_j$, $j=1,2$, with the odd degree of the vertices at the strip edges while those inside the strip are of an even degree. We note that the corresponding operator spectrum may have a point component if $\ell_1/\ell_2\in\mathbb{Q}$ \cite[Sec.~4.5]{BK}, however, away from the values $(\pi/\ell_j)^2$, $j=1,2$ and the value one, the spectrum is absolutely continuous; we are going to show that at high energies the corresponding generalized eigenfunctions are suppressed at the vicinity of the edges.

On the other hand, we construct a different type strip cut from a `brick' lattice with the edge lengths $\ell_j$, $j=1,2,3$, in which only the vertices at the left edge are of an even degree, while the degree of all the other vertices is odd. Staying again away from the energy values which might give rise to eigenvalues, we find that at high energies the generalized eigenfunctions are supported in the vicinity of the left edge.

\section{Rectangular lattice}

\begin{figure}
\centering
\includegraphics[height=7cm]{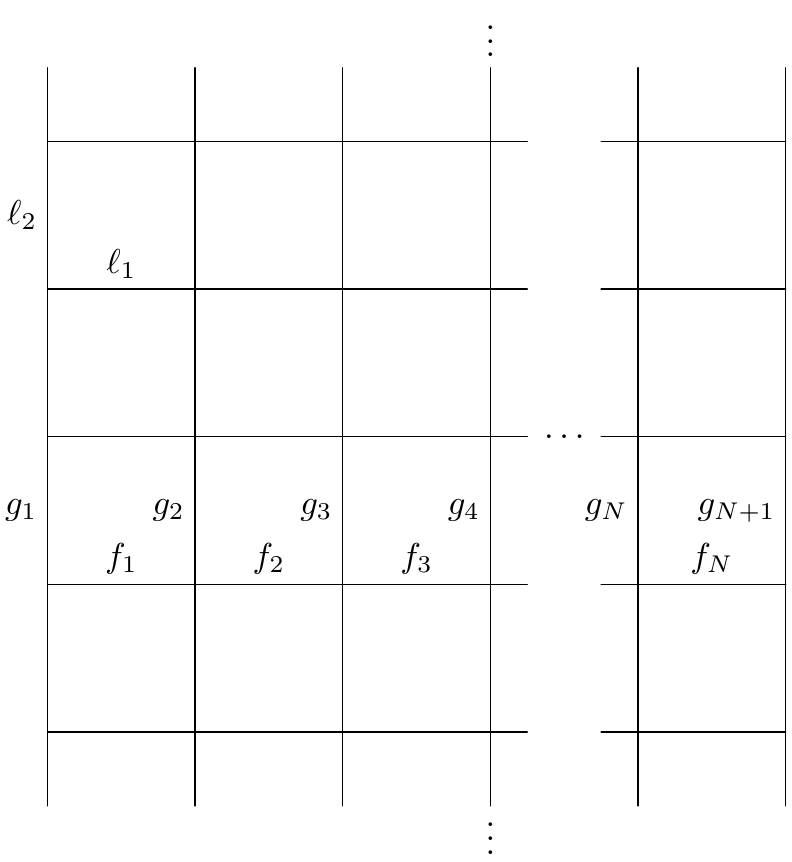}
\caption{Rectangular lattice}
\label{fig1}
\end{figure}

Consider first a two-dimensional rectangular lattice sketched in Figure~\ref{fig1}. The lengths of the horizontal edges are $\ell_1$, the lengths of the vertical edges are $\ell_2$; we cut from the lattice a vertical strip which contains $N$ cells in the horizontal direction. As we have said above, we are interested in the absolutely continuous spectrum of the system. It can be found using the Floquet-Bloch analysis; the elementary cell consists of $N$ horizontal edges in one horizontal line -- the wavefunction components on them are denoted by $f_j$, $j=1,\dots, N$, the edges being identified with the intervals $(0,\ell_1)$ where $x=0$ corresponds to the left vertex of the edge and $x=\ell_1$ to the right one -- and $N+1$ vertical edges with the `lower' vertex, corresponding to $x=0$, common with a vertex of the previously mentioned set of edges, and the other end, $x=\ell_2$, referring to the `neighboring upper floor' of the graph; the wavefunction components on these edges are denoted as $g_j$, $j=1,\dots, N+1$. The elementary cell just above (below) the described one has wavefunction components $\mathrm{e}^{\pm i\theta} f_j$ and $\mathrm{e}^{\pm i\theta} g_j$, respectively, etc.

By assumption, the coupling conditions at the vertices are
\begin{eqnarray*}
  \begin{pmatrix}-1 & 1 & 0\\ 0 &-1 & 1\\ 1 & 0 & -1\end{pmatrix}\begin{pmatrix}g_1(0)\\ \mathrm{e}^{-i\theta}g_1(\ell_2)\\f_1(0) \end{pmatrix}+ i\begin{pmatrix}1 & 1 & 0\\ 0 &1 & 1\\ 1 & 0 & 1\end{pmatrix}\begin{pmatrix}g_1'(0)\\ -\mathrm{e}^{-i\theta}g_1'(\ell_2)\\f_1'(0) \end{pmatrix} &=& 0\,,\\
  \begin{pmatrix}-1 & 1 & 0 & 0\\ 0 &-1 & 1 & 0\\ 0 & 0 & -1 & 1\\ 1 & 0 & 0 & -1\end{pmatrix}\begin{pmatrix}g_j(0)\\ f_{j-1}(\ell_1)\\ \mathrm{e}^{-i\theta}g_j(\ell_2)\\f_j(0) \end{pmatrix} + i \begin{pmatrix}1 & 1 & 0 & 0\\ 0 &1 & 1 & 0\\ 0 & 0 & 1 & 1\\ 1 & 0 & 0 & 1\end{pmatrix} \begin{pmatrix}g_j'(0)\\- f_{j-1}'(\ell_1)\\ -\mathrm{e}^{-i\theta}g_j'(\ell_2)\\f_j'(0) \end{pmatrix} & = & 0\,,\quad j=2, \dots, N\,,
\\
  \begin{pmatrix}-1 & 1 & 0\\ 0 &-1 & 1\\ 1 & 0 & -1\end{pmatrix}\begin{pmatrix}g_{N+1}(0)\\ f_{N}(\ell_1)\\ \mathrm{e}^{-i\theta}g_{N+1}(\ell_2) \end{pmatrix}+ i\begin{pmatrix}1 & 1 & 0\\ 0 &1 & 1\\ 1 & 0 & 1\end{pmatrix}\begin{pmatrix}g_{N+1}'(0)\\ -f_{N}'(\ell_1)\\ -\mathrm{e}^{-i\theta}g_{N+1}'(\ell_2) \end{pmatrix} &=& 0\,.
\end{eqnarray*}

Let us state the main result of the section:
\begin{theorem}\label{thm1}
For a fixed $K\in\big(0,\frac12\pi\big)$, consider momenta $k>0$ such that $k\not \in \bigcup_{n\in\mathbb{N}_0}\left(\frac{n\pi-K}{\ell_2},\frac{n\pi+K}{\ell_2}\right)$. Suppose that the restriction of the generalized eigenfunction corresponding to energy $k^2$ to the elementary cell is normalized, then its components at the leftmost and rightmost vertical edges are at most of order $\mathcal{O}(k^{-1})$.
\end{theorem}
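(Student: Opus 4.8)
The plan is to eliminate the edge degrees of freedom in favour of the vertex values and then exploit the odd/even dichotomy of the coupling matrix $U_v$ locally at the boundary. First I would write, on each edge away from the resonances $\sin k\ell_j=0$, the Dirichlet-to-Neumann representation of the solution: on a vertical edge carrying $g$ the derivatives $g'(0)$ and $g'(\ell_2)$ equal $k(-\cot k\ell_2\,g(0)+\csc k\ell_2\,g(\ell_2))$ and $k(-\csc k\ell_2\,g(0)+\cot k\ell_2\,g(\ell_2))$ respectively, and analogously for the horizontal edges with $\ell_1$. Substituting these into the three matching conditions turns them into a finite linear system for the vertex values $g_j(0),g_j(\ell_2),f_j(0),f_j(\ell_1)$ in which every derivative carries an explicit factor $k$. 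The hypothesis $k\notin\bigcup_n\big(\tfrac{n\pi-K}{\ell_2},\tfrac{n\pi+K}{\ell_2}\big)$ guarantees $|\sin k\ell_2|\ge\sin K$, so that $\cot k\ell_2$ and $\csc k\ell_2$, and hence the vertical Dirichlet-to-Neumann matrix together with its inverse, stay uniformly bounded; no analogous control of $\ell_1$ will be needed.

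Next I would localise at the leftmost vertex, whose matching matrix is the $3\times 3$ cyclic $U_v$. The decisive point is that for odd degree $U_v+I$ is invertible, since its spectrum, the cube roots of unity, avoids $-1$; thus $(U_v-I)\Psi_v+i(U_v+I)\Psi_v'=0$ can be solved for the derivative vector as $\Psi_v'=i(U_v+I)^{-1}(U_v-I)\Psi_v$, forcing $\Psi_v'=\mathcal O(\Psi_v)$ rather than the generic $\mathcal O(k\,\Psi_v)$. Writing out the three scalar equations and abbreviating the scaled outward derivatives by $P,Q,R$ (so that $P,Q$ are the $g_1$-derivatives divided by $k$ and $R$ the $f_1$-derivative divided by $k$), the right-hand sides are the $\mathcal O(1)$ vertex-value differences divided by $ik$; a three-line elimination among the relations $P+e^{-i\theta}Q$, $e^{-i\theta}Q+R$ and $P+R$ then yields $P,Q,R=\mathcal O(k^{-1})$. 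In particular the derivatives $g_1'(0),g_1'(\ell_2)$ come out $\mathcal O(1)$.

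I would then convert this into a bound on the values. The scaled derivatives $P,Q$ are precisely the image of $(g_1(0),g_1(\ell_2))$ under the vertical matrix $\big(\begin{smallmatrix}-\cot k\ell_2&\csc k\ell_2\\\csc k\ell_2&-\cot k\ell_2\end{smallmatrix}\big)$, whose determinant is $\cot^2 k\ell_2-\csc^2 k\ell_2=-1$ and whose entries are bounded by $1/\sin K$. Inverting it gives $g_1(0),g_1(\ell_2)=\mathcal O(k^{-1})$, and since $g_1(x)=\frac{\sin k(\ell_2-x)}{\sin k\ell_2}g_1(0)+\frac{\sin kx}{\sin k\ell_2}g_1(\ell_2)$ with $|\sin k\ell_2|\ge\sin K$, the whole component $g_1$ is $\mathcal O(k^{-1})$ uniformly. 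The rightmost vertex is again of odd degree three with the same cyclic structure, so the identical argument bounds $g_{N+1}$; with the cell normalised the remaining vertex values are $\mathcal O(1)$, whence the leftmost and rightmost vertical components are $\mathcal O(k^{-1})$ as claimed.

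The step I expect to cost the most work is the uniform passage from the $L^2$ normalisation of the cell to pointwise $\mathcal O(1)$ bounds on the vertex values that feed the right-hand sides, carried out uniformly in $k$ and in the Bloch phase $\theta$. This is delicate near the \emph{horizontal} resonances $\sin k\ell_1=0$, which the hypothesis deliberately does not exclude: there the map from endpoint values to the edge $L^2$ norm degenerates, so one must argue that a normalised cell still has bounded vertex values (indeed small ones near resonance) rather than mass hidden in an interior bump with uncontrolled endpoints. The reassuring structural point, which I would make precise, is that the horizontal quantity $R$ never has to be bounded a priori: the three boundary equations are solved only for the vertical data, and since merely the $\ell_2$-controlled Dirichlet-to-Neumann inverse is used, the horizontal contribution emerges as an $\mathcal O(k^{-1})$ by-product. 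Propagating this local balance to a statement valid for every cell and every admissible $\theta$, with all constants depending only on $K$, is the remaining technical burden.
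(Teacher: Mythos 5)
Your argument is correct, and it follows a genuinely different route from the paper's. The paper works throughout in plane-wave amplitude coordinates: it invokes the vertex-scattering asymptotics $S(k)=I+\mathcal{O}(k^{-1})$ for the degree-three boundary vertices (the expansion \eqref{eq:rec:s}, taken from \cite{ETa}), converts the vertex conditions into relations such as $a-b=\mathcal{O}(k^{-1})(b+c)$ and $b\,\mathrm{e}^{2ik\ell_2}-a=\mathcal{O}(k^{-1})(a+c)$, eliminates to get $b\sin k\ell_2=\mathcal{O}(k^{-1})(b+c)$, and uses $|\sin k\ell_2|\geq\sin K$. You instead work in vertex-value coordinates: the observation that $-1\notin\sigma(U_v)$ for an odd cycle makes $U_v+I$ invertible, so that $\Psi_v'=i(U_v+I)^{-1}(U_v-I)\Psi_v$ is an effectively Neumann condition at high energy, combined with the edge Dirichlet-to-Neumann map whose inverse is bounded by $1/\sin K$ under the hypothesis. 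These are two dressings of the same parity mechanism -- $S(k)\to I$ as $k\to\infty$ is precisely the statement that $U_v+I$ is invertible -- but your version is more self-contained (it does not need the $S(k)$ expansion as an input), while the paper's is shorter given that expansion. Your elimination at the vertex, the determinant $\cot^2 k\ell_2-\csc^2 k\ell_2=-1$, and the absorption of the $g_1$-values into the left-hand side are all sound.

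The one step you leave open -- passing from the $L^2$ normalization of the cell to a uniform $\mathcal{O}(1)$ bound on $f_1(0)$ -- is in fact a one-line computation, and your worry about the horizontal resonances $\sin k\ell_1=0$ is unfounded. Writing $f_1(x)=c\,\mathrm{e}^{-ikx}+d\,\mathrm{e}^{ikx}$, one has
\begin{equation*}
\|f_1\|^2=\ell_1\big(|c|^2+|d|^2\big)+2\,\mathrm{Re}\Big(\bar{c}d\int_0^{\ell_1}\mathrm{e}^{2ikx}\,\mathrm{d}x\Big)\geq\Big(\ell_1-\frac{1}{k}\Big)\big(|c|^2+|d|^2\big)\,,
\end{equation*}
so $\|f_1\|\leq 1$ forces $|c|,|d|=\mathcal{O}(1)$, hence $|f_1(0)|\leq|c|+|d|=\mathcal{O}(1)$, for all $k\geq 2/\ell_1$ with no condition on $k\ell_1$ whatsoever and with constants independent of $\theta$ (the Bloch phases are unimodular). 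The degeneracy you fear concerns only the opposite direction -- near $\sin k\ell_1=0$ endpoint values fail to control the $L^2$ norm, as $\sin kx$ shows -- and that direction is never needed: a solution of $-f''=k^2f$ on a fixed interval is a superposition of two almost-orthogonal plane waves at high energy, so its sup norm is uniformly dominated by its $L^2$ norm. This is in effect how the paper handles normalization as well, computing $\|f_1\|^2=|c|^2\big(2\ell_1+\mathcal{O}(k^{-1})\big)$ directly in amplitude coordinates. With this inserted, your absorption step yields $|g_1(0)|+|g_1(\ell_2)|=\mathcal{O}(k^{-1})$, the interpolation bound $\|g_1\|_\infty\leq\frac{2}{\sin K}\max\big(|g_1(0)|,|g_1(\ell_2)|\big)$ gives the claim, and the rightmost edge follows identically.
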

\begin{proof}
We focus on the vertex condition at the leftmost rightmost vertices. Since there are three edges meeting at each such junction, the vertex-scattering matrix expressed in term of the variable $\eta = \frac{1-k}{1+k}$ as in  \cite{ETa} behaves as
\begin{eqnarray}
  S(k) = \frac{1+\eta}{1+\eta+\eta^2}\begin{pmatrix}-\frac{\eta}{1+\eta} & 1 & \eta\\ \eta & -\frac{\eta}{1+\eta} & 1\\ 1 & \eta & -\frac{\eta}{1+\eta}\end{pmatrix}  = I + \mathcal{O}(k^{-1})\,.\label{eq:rec:s}
\end{eqnarray}
for $k\to\infty$ which is related to the fact that the matrix \eqref{matU} has an odd degree in this case, and therefore it does not have the eigenvalue $-1$, see also \cite[eq.~(8)]{KS}.

For definiteness, we consider the leftmost vertex with the aim to prove that the corresponding component of the generalized eigenfunction at this edge is small. We use the Ansatz $g_1(x) = a \,\mathrm{e}^{-ikx}+b\,\mathrm{e}^{ikx} = b\,\mathrm{e}^{ik\ell_2}\mathrm{e}^{-ik\hat x}+a\,\mathrm{e}^{-ik\ell_2}\mathrm{e}^{ik\hat x}$, where $\hat x := \ell_2 -x$ is the coordinate on the vertical edge with the orientation opposite to that of $x$. Similarly, we choose $f_1(x) = c \,\mathrm{e}^{-ikx}+d\,\mathrm{e}^{ikx}$. The scattering matrix at this vertex maps the vector of amplitudes of the incoming waves into the vector of the amplitudes of the outgoing waves, in other words
$$
  S(k) \begin{pmatrix}a\\b\,\mathrm{e}^{i(k\ell_2-\theta)}\\c\end{pmatrix} = \begin{pmatrix}b\\ a\,\mathrm{e}^{-i(k\ell_2+\theta)}\\d\end{pmatrix}\,.
$$
Using the explicit form \eqref{eq:rec:s} of $S(k)$ in combination with the previous equation the previous equation we can write
\begin{eqnarray}
  a-b & = & \mathcal{O}(k^{-1})b+\mathcal{O}(k^{-1})c\,,\label{eq:rec:abc1}\\
  b\,\mathrm{e}^{2ik\ell_2}-a & = & \mathcal{O}(k^{-1})a+ \mathcal{O}(k^{-1})c\,,\label{eq:rec:abc2}\\
  c-d & = & \mathcal{O}(k^{-1})a+\mathcal{O}(k^{-1})b+\mathcal{O}(k^{-1})c\,.\label{eq:rec:abc3}
\end{eqnarray}
Substituting for $a$ from \eqref{eq:rec:abc1} into \eqref{eq:rec:abc2} we obtain
\begin{equation} \label{estd}
  b\sin{k\ell_2} = \mathcal{O}(k^{-1})b+\mathcal{O}(k^{-1})c\,.
\end{equation}
Choosing now $k\ell_2 = n\pi +\delta$ with $n\in \mathbb{Z}$ and $\delta\in (-\pi/2,\pi/2]$ we get
$$
  b\big(\sin{\delta}+\mathcal{O}(k^{-1})\big)=\mathcal{O}(k^{-1})c\,.
$$
According to the assumption we fix a $K\in\big(0,\frac12\pi\big)0$ and consider $|\delta|>K>0$, then we have $b=\mathcal{O}(k^{-1})c$, thus \eqref{eq:rec:abc1} and \eqref{eq:rec:abc3} also yield $a=\mathcal{O}(k^{-1})c$ and $d=c+\mathcal{O}(k^{-1})c$, respectively. The first two relations in turn imply
$$
  \|g_1\|^2 = \int_0^{\ell_2} |a\,\mathrm{e}^{-ikx}+b\,\mathrm{e}^{ikx}|^2\,\mathrm{d}x \leq \int_0^{\ell_2}(|a|^2+|b|^2+2|a||b|)\,\mathrm{d}x = \mathcal{O}(k^{-2})|c|^2\,.
$$
The eigenfunction restriction to the period cell is supposed to be normalized, and therefore
\begin{multline*}
  1\geq \|f_1\|^2 = \int_0^{\ell_1} |c\,\mathrm{e}^{-ikx}+d\,\mathrm{e}^{ikx}|^2\,\mathrm{d}x = \int_0^{\ell_1} |2c\cos{kx}+\mathcal{O}(k^{-1})c|^2\,\mathrm{d}x =
\\
 = |c|^2\int_0^{\ell_1} [4\cos^2{kx}+\mathcal{O}(k^{-1})\cos{kx}+\mathcal{O}(k^{-2})]\,\mathrm{d}x = |c|^2 \big(2\ell_1+\mathcal{O}(k^{-1})\big)\,.
\end{multline*}
Hence we have $|c|^2\leq\frac{1}{2\ell_1}+\mathcal{O}(k^{-1})$, and consequently $\|g_1\|^2 \leq \mathcal{O}(k^{-2})$, in other words, $\|g_1\|\leq \mathcal{O}(k^{-1})$. The argument for the right strip edge is analogous.
\end{proof}

\section{The `brick' lattice}
\label{s:brick}

\begin{figure}
\centering
\includegraphics[height=8.5cm]{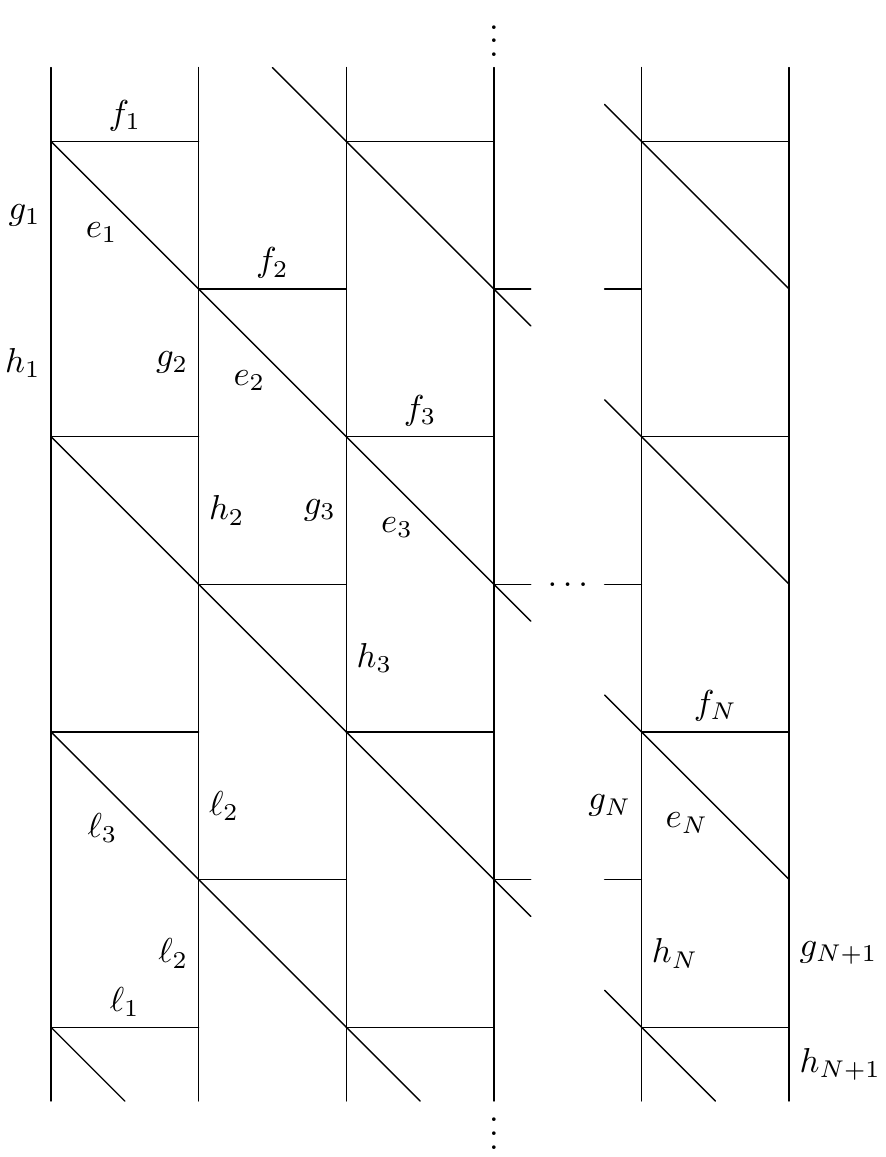}
\caption{``Brick'' lattice}
\label{fig2}
\end{figure}

Now we consider a lattice strip exhibiting the opposite effect having the transport related states localized (for most energy values) near a strip edge. The lattice structure is shown in Figure~\ref{fig2} and resembles a wall made of bricks with slanted lines. As in the previous case, we cut a vertical strip with $N$ cells in the horizontal direction. The horizontal edges are parametrized by the intervals $(0,\ell_1)$ with $x=0$ referring to the left vertex, the vertical edges by the intervals $(0,\ell_2)$ with $x=0$ at the top vertex (to keep the same notation, we describe the vertical edges on the left-hand side, with a `trivial' vertex in the middle, as composed of two edges of length $\ell_2$). Finally, the slanted edges are parametrized by the intervals $(0,\ell_3)$ with $x=0$ at the left-top vertex.

We denote the wavefunction components by $g_j$, $h_j$ (the vertical edges), $f_j$ (the horizontal edge) and $e_j$ (the slanted edge). The index $j$ denotes the cell and, for instance, the component $f_{j+1}$ is on the horizontal edge which is below and to the right from the edge supporting component $f_{j}$ (see Figure~\ref{fig2}). The `Bloch convention' is the same as in the previous example: the wavefunction components in the cells just above (below) the chosen periodicity cell are multiplied by the factor $\mathrm{e}^{\pm i\theta}$, respectively. We adopt the following Ansatz
\begin{eqnarray}
  g_j(x) = a_j \,\mathrm{e}^{-ikx}+b_j\,\mathrm{e}^{ikx}\,,&\quad &  h_j(x) = c_j \,\mathrm{e}^{-ikx}+d_j\,\mathrm{e}^{ikx}\,, \nonumber \\[-.75em] \label{Ansatz2} \\[-.75em]
  f_j(x) = p_j \,\mathrm{e}^{-ikx}+r_j\,\mathrm{e}^{ikx}\,,&\quad &  e_j(x) = s_j \,\mathrm{e}^{-ikx}+t_j\,\mathrm{e}^{ikx}\,.\nonumber
\end{eqnarray}
The key feature is that all the vertices of the graph except those on the most-left vertical line are of an odd degree, three or five. The matching conditions at these vertices look as follows: for $j = 1,\dots, N-1$ they are
$$
  \begin{pmatrix}-1 & 1 & 0 & 0 & 0\\ 0 &-1 & 1 & 0 & 0\\ 0 & 0 & -1 & 1 & 0\\0 & 0 & 0 & -1 & 1\\ 1 & 0 & 0 & 0 & -1\end{pmatrix}\begin{pmatrix} e_j(\ell_3)\\ g_{j+1}(0)\\ e_{j+1}(0) \\f_{j+1}(0)\\h_{j+1}(\ell_2) \mathrm{e}^{i\theta}\end{pmatrix}+ i\begin{pmatrix}1 & 1 & 0 & 0 & 0\\ 0 &1 & 1 & 0 & 0\\ 0 & 0 & 1 & 1 & 0\\0 & 0 & 0 & 1 & 1\\ 1 & 0 & 0 & 0 & 1\end{pmatrix}\begin{pmatrix} -e_j'(\ell_3)\\ g_{j+1}'(0)\\ e_{j+1}'(0) \\f_{j+1}'(0)\\-h_{j+1}'(\ell_2)\mathrm{e}^{i\theta} \end{pmatrix}= 0\,,\\
$$
for $j = 1,\dots, N$ we have
$$
  \begin{pmatrix}-1 & 1 & 0\\ 0 &-1 & 1\\ 1 & 0 & -1\end{pmatrix}\begin{pmatrix}f_j(\ell_1)\\h_{j+1}(0)\mathrm{e}^{i\theta}\\ g_{j+1}(\ell_2)\mathrm{e}^{i\theta}\end{pmatrix}+ i\begin{pmatrix}1 & 1 & 0\\ 0 &1 & 1\\ 1 & 0 & 1\end{pmatrix}\begin{pmatrix}-f_j'(\ell_1)\\h_{j+1}'(0)\mathrm{e}^{i\theta}\\ -g_{j+1}'(\ell_2)\mathrm{e}^{i\theta}\end{pmatrix} = 0
$$
and, finally,
$$
  \begin{pmatrix}-1 & 1 & 0\\ 0 &-1 & 1\\ 1 & 0 & -1\end{pmatrix}\begin{pmatrix} h_{N+1}(\ell_2) \mathrm{e}^{i\theta}\\e_N(\ell_3)\\ g_{N+1}(0)\end{pmatrix}+ i\begin{pmatrix}1 & 1 & 0\\ 0 &1 & 1\\ 1 & 0 & 1\end{pmatrix}\begin{pmatrix} -h'_{N+1}(\ell_2) \mathrm{e}^{i\theta}\\-e'_N(\ell_3)\\ g'_{N+1}(0)\end{pmatrix} = 0\,.
$$

To state the next result we denote by $q_j^{(m)}$, $m=1,\dots, 8$ subsequently the coefficients $a_j$, $b_j$, $c_j$, $d_j$, $p_j$, $r_j$, $s_j$ and $t_j$. Then we can make the following claim:
\begin{theorem} \label{thm2}
For a fixed $K\in\big(0,\frac12\pi\big)$, consider momenta $k>0$ such that
\begin{equation}\label{restr}
k\not \in \bigcup_{n\in\mathbb{N}_0}\left(\frac{n\pi-K}{\ell_1},\frac{n\pi+K}{\ell_1}\right)\cup \bigcup_{n\in\mathbb{N}_0}\left(\frac{n\pi-K}{\ell_2},\frac{n\pi+K}{\ell_2}\right) \cup \bigcup_{n\in\mathbb{N}_0}\left(\frac{n\pi-K}{\ell_3},\frac{n\pi+K}{\ell_3}\right).
\end{equation}
Suppose again that the restriction of the generalized eigenfunction corresponding to energy $k^2$ to the elementary cell is normalized, then $ q_j^{(m)}$ is at most of order $\mathcal{O}(k^{1-j})$ as $k\to\infty$.
\end{theorem}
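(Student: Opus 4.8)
The plan is to mimic the strategy of Theorem~\ref{thm1}, now tracking the amplitudes cell by cell and turning the pointwise estimate into a recursion in the horizontal index $j$. First I would record the high-energy form of the vertex-scattering matrices. Every vertex entering the matching conditions above has odd degree, three or five, so by the same argument that gives \eqref{eq:rec:s} --- the matrix \eqref{matU} of odd size has no eigenvalue $-1$ --- each such $S(k)$ equals $I+\mathcal{O}(k^{-1})$ as $k\to\infty$. Substituting the Ansatz \eqref{Ansatz2} and reading off, at each vertex, the incoming and outgoing plane-wave amplitudes (an amplitude $A\,\mathrm{e}^{-ikx}$ is outgoing at $x=\ell$ and incoming at $x=0$, and conversely for $B\,\mathrm{e}^{ikx}$), I would rewrite every matching condition as a scattering relation $S(k)\,(\mathrm{incoming})=(\mathrm{outgoing})$, that is, as ``outgoing $=$ incoming $+\,\mathcal{O}(k^{-1})\cdot(\mathrm{amplitudes})$''.

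The core step is to extract from these relations a one-step recursion. At the degree-three vertex coupling $f_j(\ell_1)$ to $h_{j+1}(0)$ and $g_{j+1}(\ell_2)$, and at the degree-five vertex coupling $e_j(\ell_3)$ to $g_{j+1}(0)$, $e_{j+1}(0)$, $f_{j+1}(0)$ and $h_{j+1}(\ell_2)$, the amplitudes carrying the index $j+1$ appear together with phase factors $\mathrm{e}^{\pm ik\ell_i}$. Proceeding exactly as in the passage from \eqref{eq:rec:abc1}--\eqref{eq:rec:abc2} to \eqref{estd}, I would eliminate, for each vertical or slanted edge of cell $j+1$, the companion amplitude so as to produce a factor $\sin k\ell_i$ multiplying the remaining unknown, the right-hand side being $\mathcal{O}(k^{-1})$ times a linear combination of the amplitudes of cells $j$ and $j+1$. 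The hypothesis \eqref{restr} is precisely what guarantees $|\sin k\ell_i|\ge \sin K>0$ for each $i=1,2,3$ simultaneously, so these factors may be divided out. Collecting the eight resulting equations for cell $j+1$ into a vector relation $\big(I+\mathcal{O}(k^{-1})\big)\,q_{j+1}=\mathcal{O}(k^{-1})\,q_j$, with the coefficient matrix on the left boundedly invertible for $k$ large, I obtain the desired contraction $q_{j+1}^{(m)}=\mathcal{O}(k^{-1})\,q_j$.

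It then remains to anchor and iterate the recursion. For the base case, the normalization of the eigenfunction on the elementary cell forces every amplitude, in particular those of the leftmost cell, to be $\mathcal{O}(1)$; here the bound $|\sin k\ell_i|\ge\sin K$ is again used, exactly as in the final $\|f_1\|$ estimate of the previous proof, to compare the $L^2$-norm of a component with the moduli of its amplitudes. Feeding $q_1^{(m)}=\mathcal{O}(1)$ into the contraction and inducting on $j$ yields $q_j^{(m)}=\mathcal{O}(k^{1-j})$; the right-boundary vertex (the last matching condition) is handled by the same elimination and only reinforces the decay. I would note that the even-degree vertices on the leftmost vertical line never enter this argument: they support the $\mathcal{O}(1)$ amplitudes at the edge but play no role in propagating the estimate rightwards.

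The step I expect to be the main obstacle is the clean derivation of the vector recursion $\big(I+\mathcal{O}(k^{-1})\big)q_{j+1}=\mathcal{O}(k^{-1})q_j$. Because the degree-five vertex couples one incoming edge of cell $j$ to four edges of cell $j+1$, the eliminations are more entangled than in Theorem~\ref{thm1}, and care is needed to verify that each of the eight amplitudes of cell $j+1$ genuinely acquires its own $\sin k\ell_i$ factor rather than being expressed circularly through the others; ensuring that the left-hand coefficient matrix is invertible with a uniformly bounded inverse is the crux of the matter.
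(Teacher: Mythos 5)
Your strategy is the paper's own --- odd degree gives $S(k)=I+\mathcal{O}(k^{-1})$, pairwise elimination produces a factor $\sin k\ell_i$ in front of one amplitude per edge, hypothesis \eqref{restr} bounds those sines below by $\sin K$, and normalization anchors the first cell --- but the step you single out as the crux fails in the precise form you state it. There is no one-step block contraction $\big(I+\mathcal{O}(k^{-1})\big)q_{j+1}=\mathcal{O}(k^{-1})q_j$: the horizontal edge $f_{j+1}$ terminates at the degree-three vertex it shares with $h_{j+2}$ and $g_{j+2}$, and the slanted edge $e_{j+1}$ terminates at the degree-five vertex of cell $j+2$, so after the sine elimination the equations for $r_{j+1},p_{j+1},s_{j+1},t_{j+1}$ carry $\mathcal{O}(k^{-1})$ multiples of cell-$(j+2)$ amplitudes --- this is visible in the paper's relations \eqref{eq:brick:eq13} and \eqref{eq:brick:eq15}, where $b_{j+1},c_{j+1}$ resp. $a_{j+1},d_{j+1},p_{j+1},s_{j+1}$ appear on the right for the cell-$j$ unknowns. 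The true structure is block-tridiagonal, $\big(I+\mathcal{O}(k^{-1})\big)q_{j+1}=\mathcal{O}(k^{-1})q_j+\mathcal{O}(k^{-1})q_{j+2}$, and your purely forward induction (``feeding $q_1^{(m)}=\mathcal{O}(1)$ into the contraction and inducting on $j$'') cannot close, because at step $j$ you have no bound yet on the cell-$(j+2)$ amplitudes entering the right-hand side. Your intra-cell worry, by contrast, is unfounded: each of the eight amplitudes does acquire its own sine factor ($\sin k\ell_2$ for $b_j,d_j$ with companions $a_j,c_j$; $\sin k\ell_3$ for $t_j$ with companion $s_j$; $\sin k\ell_1$ for $r_j$ with companion $p_j$), exactly as in the paper; the genuine entanglement is in $j$, not within a cell.

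The repair is what the paper's phrase ``eliminating successively the variables'' applied to \eqref{eq:brick:eq:q1} encodes, and it is cheap because $N$ is fixed and finite. Normalization of the restriction to the whole period cell gives the a priori bound $q_j^{(m)}=\mathcal{O}(1)$ for \emph{every} $j$, since $\int_0^{\ell}|A\,\mathrm{e}^{-ikx}+B\,\mathrm{e}^{ikx}|^2\,\mathrm{d}x=\ell\big(|A|^2+|B|^2\big)+\mathcal{O}(k^{-1})|A||B|$ (note that no sine bound is needed for this anchor, contrary to what you suggest). Feeding this into the tridiagonal system yields $q_j^{(m)}=\mathcal{O}(k^{-1})$ for all $j\ge 2$, and each further substitution pass gains one power of $k^{-1}$ per cell; the reason the bootstrap terminates with the directed decay $\mathcal{O}(k^{1-j})$ --- the structural point your draft leaves implicit --- is that the only \emph{backward} coupling in \eqref{eq:brick:eq:q1} is through the right-moving amplitudes $r_{j-1},t_{j-1}$ alone, so after at most $N$ passes one arrives at $q_j^{(m)}=\mathcal{O}(k^{-1})\big(|r_{j-1}|+|t_{j-1}|\big)=\mathcal{O}(k^{1-j})\big(|r_1|+|t_1|\big)$, which is the paper's conclusion. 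With your one-step contraction replaced by this finite elimination over the block-tridiagonal system, your argument coincides with the paper's proof.
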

\begin{proof}
As in the previous example, the scattering matrices at the vertices, except those lying on the leftmost vertical line, fulfill the relations
\begin{eqnarray}
  S_{j1}(k)\begin{pmatrix} t_j \mathrm{e}^{ik\ell_3}\\ a_{j+1}\\ s_{j+1}\\p_{j+1}\\d_{j+1} \mathrm{e}^{ik\ell_2}\mathrm{e}^{i\theta}\end{pmatrix} &= & \begin{pmatrix} s_j \mathrm{e}^{-ik\ell_3}\\ b_{j+1}\\ t_{j+1}\\r_{j+1}\\c_{j+1} \mathrm{e}^{-ik\ell_2}\mathrm{e}^{i\theta}\end{pmatrix}\,,\quad j = 1,\dots, N-1\,, \label{eq:brick:s1}\\
  S_{j2}(k)\begin{pmatrix}r_j\mathrm{e}^{ik\ell_1}\\c_{j+1}\mathrm{e}^{i\theta}\\b_{j+1}\mathrm{e}^{ik\ell_2}\mathrm{e}^{i\theta}\end{pmatrix} & = &\begin{pmatrix}p_j\mathrm{e}^{-ik\ell_1}\\d_{j+1}\mathrm{e}^{i\theta}\\a_{j+1}\mathrm{e}^{-ik\ell_2}\mathrm{e}^{i\theta}\end{pmatrix}\,,\quad j = 1,\dots, N\,,\label{eq:brick:s2}\\
  S_{N1}(k)\begin{pmatrix}d_{N+1} \mathrm{e}^{ik\ell_2}\mathrm{e}^{i\theta}\\t_N \mathrm{e}^{ik\ell_3}\\ a_{N+1}\end{pmatrix} & = &\begin{pmatrix}c_{N+1} \mathrm{e}^{-ik\ell_2}\mathrm{e}^{i\theta}\\s_N \mathrm{e}^{-ik\ell_3}\\ b_{N+1} \end{pmatrix}\,.\label{eq:brick:s3}
\end{eqnarray}
Using \eqref{eq:rec:s} and the analogous relation for vertices of degree five given in \cite{ETa} we infer that $S_{ji}(k) = I + \mathcal{O}(k^{-1})$ holds for $i=1,2$ and $j = 1,\dots N$ as $k\to\infty$, where $I$ denotes the identity matrix of the appropriate size. These relations together with eqs. \eqref{eq:brick:s1}--\eqref{eq:brick:s3} yield for $j = 1, \dots, N$ the identities
\begin{eqnarray}
t_j \mathrm{e}^{ik\ell_3}-s_j \mathrm{e}^{-ik\ell_3} &= &  \mathcal{O}(k^{-1})t_j+\mathcal{O}(k^{-1})a_{j+1}+\mathcal{O}(k^{-1})d_{j+1}+\mathcal{O}(k^{-1})p_{j+1}+\mathcal{O}(k^{-1})s_{j+1}\,,\label{eq:brick:eq1}\\
r_j \mathrm{e}^{ik\ell_1}-p_j \mathrm{e}^{-ik\ell_1} & = &  \mathcal{O}(k^{-1})r_j+ \mathcal{O}(k^{-1})b_{j+1}+ \mathcal{O}(k^{-1})c_{j+1}\,;\label{eq:brick:eq2}
\end{eqnarray}
for $j = 2, \dots, N+1$ we have
\begin{eqnarray}
a_j-b_j &= &  \mathcal{O}(k^{-1})t_{j-1}+ \mathcal{O}(k^{-1})a_j+ \mathcal{O}(k^{-1})d_j+ \mathcal{O}(k^{-1})p_j+ \mathcal{O}(k^{-1})s_j\,,\label{eq:brick:eq3}\\
b_j \mathrm{e}^{ik\ell_2}-a_j\mathrm{e}^{-ik\ell_2} & = & \mathcal{O}(k^{-1})r_{j-1}+ \mathcal{O}(k^{-1})b_j+ \mathcal{O}(k^{-1})c_j\,,\label{eq:brick:eq4}\\
c_j-d_j & = &  \mathcal{O}(k^{-1})r_{j-1}+ \mathcal{O}(k^{-1})b_j+ \mathcal{O}(k^{-1})c_j\,,\label{eq:brick:eq5}\\
d_j \mathrm{e}^{ik\ell_2}-c_j \mathrm{e}^{-ik\ell_2} &= &  \mathcal{O}(k^{-1})t_{j-1}+ \mathcal{O}(k^{-1})a_j+ \mathcal{O}(k^{-1})d_j+ \mathcal{O}(k^{-1})p_j+ \mathcal{O}(k^{-1})s_j\,,\label{eq:brick:eq6}
\end{eqnarray}
and finally, for $j = 2, \dots, N$,
\begin{eqnarray}
p_j -r_j &= &  \mathcal{O}(k^{-1})t_{j-1}+ \mathcal{O}(k^{-1})a_j+ \mathcal{O}(k^{-1})d_j+ \mathcal{O}(k^{-1})p_j+ \mathcal{O}(k^{-1})s_j\,,\label{eq:brick:eq7}\\
s_j - t_j &= &  \mathcal{O}(k^{-1})t_{j-1}+ \mathcal{O}(k^{-1})a_j+ \mathcal{O}(k^{-1})d_j+ \mathcal{O}(k^{-1})p_j+ \mathcal{O}(k^{-1})s_j\,.\label{eq:brick:eq8}
\end{eqnarray}
To take the right edge of the strip into account, we put here and in the following $p_{N+1}=r_{N+1} = s_{N+1} = t_{N+1} = 0$. Using \eqref{eq:brick:eq3} and \eqref{eq:brick:eq4} we find that for $j = 2,\dots, N+1$ it holds
\begin{eqnarray}
b_j \sin{k\ell_2} =  \mathcal{O}(k^{-1})t_{j-1}+ \mathcal{O}(k^{-1})r_{j-1}+ \mathcal{O}(k^{-1})b_{j}+ \mathcal{O}(k^{-1})d_j+ \mathcal{O}(k^{-1})p_j+ \mathcal{O}(k^{-1})s_j\,,\label{eq:brick:eq9}\\
a_j   =  b_j + \mathcal{O}(k^{-1})t_{j-1}+ \mathcal{O}(k^{-1})r_{j-1}+ \mathcal{O}(k^{-1})b_{j}+ \mathcal{O}(k^{-1})d_j+ \mathcal{O}(k^{-1})p_j+ \mathcal{O}(k^{-1})s_j\,,\label{eq:brick:eq10}
\end{eqnarray}
and similarly, \eqref{eq:brick:eq5} and \eqref{eq:brick:eq6} yield for $j = 2,\dots, N+1$
\begin{eqnarray}
d_j \sin{k\ell_2} =  \mathcal{O}(k^{-1})t_{j-1}+ \mathcal{O}(k^{-1})r_{j-1}+ \mathcal{O}(k^{-1})b_{j}+ \mathcal{O}(k^{-1})d_j+ \mathcal{O}(k^{-1})p_j+ \mathcal{O}(k^{-1})s_j\,,\label{eq:brick:eq11}\\
c_j   =  d_j + \mathcal{O}(k^{-1})t_{j-1}+ \mathcal{O}(k^{-1})r_{j-1}+ \mathcal{O}(k^{-1})b_{j}+ \mathcal{O}(k^{-1})d_j+ \mathcal{O}(k^{-1})p_j+ \mathcal{O}(k^{-1})s_j\,.\label{eq:brick:eq12}
\end{eqnarray}
Furthermore, from \eqref{eq:brick:eq1} and \eqref{eq:brick:eq8} we get for $j = 2,\dots, N$
\begin{multline}
t_j \sin{k\ell_3} =  \mathcal{O}(k^{-1})t_{j-1}+\mathcal{O}(k^{-1})t_{j}+ \mathcal{O}(k^{-1})a_{j}+ \mathcal{O}(k^{-1})a_{j+1}+ \mathcal{O}(k^{-1})d_j+ \mathcal{O}(k^{-1})d_{j+1}\\
+\mathcal{O}(k^{-1})p_{j}+ \mathcal{O}(k^{-1})p_{j+1}+ \mathcal{O}(k^{-1})s_{j}+ \mathcal{O}(k^{-1})s_{j+1}\,,\label{eq:brick:eq13}
\end{multline}
\begin{equation}
s_j   =  t_j + \mathcal{O}(k^{-1})t_{j-1}+ \mathcal{O}(k^{-1})a_j+ \mathcal{O}(k^{-1})d_j+ \mathcal{O}(k^{-1})p_j+ \mathcal{O}(k^{-1})s_j\,,\label{eq:brick:eq14}
\end{equation}
and finally, \eqref{eq:brick:eq2} and \eqref{eq:brick:eq7} imply for $j = 2,\dots, N$
\begin{eqnarray}
\hspace{-5mm}r_j \sin{k\ell_1} &=& \mathcal{O}(k^{-1})t_{j-1}+ \mathcal{O}(k^{-1})a_{j}+ \mathcal{O}(k^{-1})d_j+ \mathcal{O}(k^{-1})s_{j}+\mathcal{O}(k^{-1})p_{j}+ \mathcal{O}(k^{-1})r_{j} + \mathcal{O}(k^{-1})b_{j+1}+ \mathcal{O}(k^{-1})c_{j+1}\,,\label{eq:brick:eq15}\\
p_j   &=&  r_j + \mathcal{O}(k^{-1})t_{j-1}+ \mathcal{O}(k^{-1})a_{j}+ \mathcal{O}(k^{-1})d_j+ \mathcal{O}(k^{-1})p_j+ \mathcal{O}(k^{-1})s_j\,.\label{eq:brick:eq16}
\end{eqnarray}
Now we can argue in analogy with the proof of Thm.~\ref{thm1} and summarize the obtained relations for the momentum values satisfying condition \eqref{restr} as
\begin{equation}
  q_j^{(m)} = \mathcal{O}(k^{-1}) \left(|r_{j-1}|+|t_{j-1}|+ \sum_{s=1}^{8}|q_j^{(s)}|+\sum_{s=1}^{8}|q_{j+1}^{(s)}| \right)\,,\label{eq:brick:eq:q1}
\end{equation}
The previous equation holds for $j=2,\dots, N$, $\:m=1,\dots, 8$, and the last sum for $j=N$ corresponding to the right strip edge includes $|a_{N+1}|$, $|b_{N+1}|$, $|c_{N+1}|$, and $|d_{N+1}|$ only . Moreover, for $m = 1,\dots, 4$ we have
\begin{equation}
  q_{N+1}^{(m)} = \mathcal{O}(k^{-1}) \left(|r_{N}|+|t_{N}|+ \sum_{s=1}^{4}|q_{N+1}^{(s)}|\right)\,.\label{eq:brick:eq:q2}
\end{equation}
Eliminating successively the variables, we can conclude from eqs. \eqref{eq:brick:eq:q1} and \eqref{eq:brick:eq:q2} that for $j=2,\dots, N+1$ one has
$$
  q_j^{(m)} = \mathcal{O}(k^{-1})(|r_{j-1}|+|t_{j-1}|) = \mathcal{O}(k^{1-j})(|r_{1}|+|t_{1}|)\,;
$$
from this relation in combination with the fact that $\|e_j\|+\|f_j\|\le 1$ holds by assumption the claim follows in the same way as in the proof of Thm.~\ref{thm1}.
\end{proof}

Let us conclude this section with a few remarks:

\medskip

(a) In contrast to Theorem~\ref{thm1} the edges here are not equivalent, it is the left one which is conducting. The claim of Theorem~\ref{thm2} can be easily extended to the situation where the strip is replaced by a halfplane.

\smallskip

(b) Note that the positive constant $K$ in both theorems can be chosen arbitrarily small, however, the estimate used in \eqref{estd} and its analogue in the proof of Theorem~\ref{thm2} show that the observed coefficient decay becomes worse as the momentum approaches the values\footnote{In the quantum graph jargon, the are sometimes referred to as `Dirichlet points', in the present context `Neumann' would be more appropriate having the effective edge decoupling at high energies in mind.} $\,k=\pi/\ell_j$.

\smallskip

(c) Note also that the scattering matrices referring to vertices at the strip edges play no role in the proofs. For vertices of degree four with our coupling all the transmition  and reflection probabilities are asymptotically the same \cite{ETa}, hence one expects the quantity $\frac{|a_1|^2-|b_1|^2}{|a_1|^2+|b_1|^2}$ (and similarly for $c_1,\,d_1$) describing the (relative) probability current along the left strip edge to range approximately between 1 and -1 as a function of $\theta$ for $k$ large enough, cf. the end of Section~\ref{ss:brick} below.

\section{Numerical results}

To get a better idea about the described effects and to show they are pronounced already for small values of the strip width $N$, let us add numerical illustrations. For both types of lattice strips, we find the coefficients of the generalized eigenfunctions for certain points in the absolutely continuous spectrum and show that for $k$ far from integer multiples of $\pi/\ell_j$, the coefficients decay in accordance with the general results obtained above.

\subsection{Rectangular lattice}

\begin{figure}
\centering
\includegraphics[height=10cm]{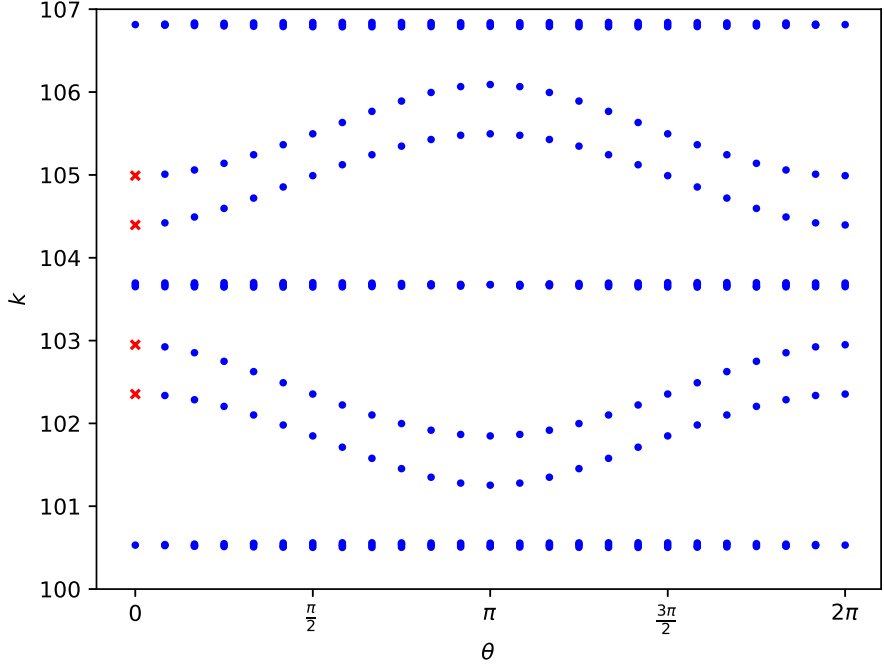}
\caption{Dispersion diagram for the rectangular lattice strip, the points marked by red crossed are used in Figure~\ref{fig4} (color online)}
\label{fig3}
\end{figure}

Consider the strip with $N=3$ and $\ell_1 = \ell_2 = 1$. The spectrum is found in the standard way, using the wavefunction Ansatz together with the matching conditions at the vertices we get a linear equation system for the coefficients $a,b$; its solvability condition for a fixed quasimomentum $\theta$ yields the eigenvalues of the corresponding Bloch component of the Hamiltonian. We did that for discrete values of $\theta$ with the step $\pi/12$ and the momentum $k$ higher than 100; the resulting dispersion diagram is shown in Figure~\ref{fig3}. We normalize the vector of coefficients by $\sin{(kx)}$ and $\cos{(kx)}$ in the description of the eigenfunction components on a single period cell.

It is obvious that the dispersion curves are of two sorts. Some of them, here in the vicinity of $32\pi \approx 100.531$, $33\pi \approx 103.673$, and $34\pi \approx 106.814$ correspond to narrow, almost flat bands. In contrast, the other bands are wide and fulfill well the assumptions of Theorem~\ref{thm1}; we choose their edge points marked by red crosses, corresponding to $\theta = 0$ and $k\approx 102.354, 102.949, 104.396, 104.991$, to plot the eigenfunction coefficients in order to illustrate how are they suppressed at the strip boundaries.

Figure~\ref{fig4} shows the values of two combinations of the coefficients $a_j,\,b_j$ (four points connected with a line) and $c_j,\,d_j$ (three points connected with a line). In the left panel, we plot $||b_j|^2-|a_j|^2|$, proportional to the absolute value of the probability current through the given edge, and a similar expression for coefficients $c_j,\,d_j$. The right panel shows $|a_j|^2+|b_j|^2$ and $|c_j|^2+|d_j|^2$. The $y$-axis scale is logarithmic and one can notice that the values of the plotted expressions on the leftmost and rightmost vertical edges are three to four orders of magnitude smaller than those the middle edges. This correspondents to the reduction of the function at the strip edges by a factor of approximately $10^{-2}$ in accordance with Theorem~\ref{thm1}. One can also notice that there is a certain, tiny indeed probability current on the vertical edges, that is, across the strip, while the wavefunction values themselves at these edges are not small.

\begin{figure}
\centering
\begin{subfigure}{.5\textwidth}
  \centering\captionsetup{width=.9\linewidth}
  \includegraphics[width=.99\linewidth]{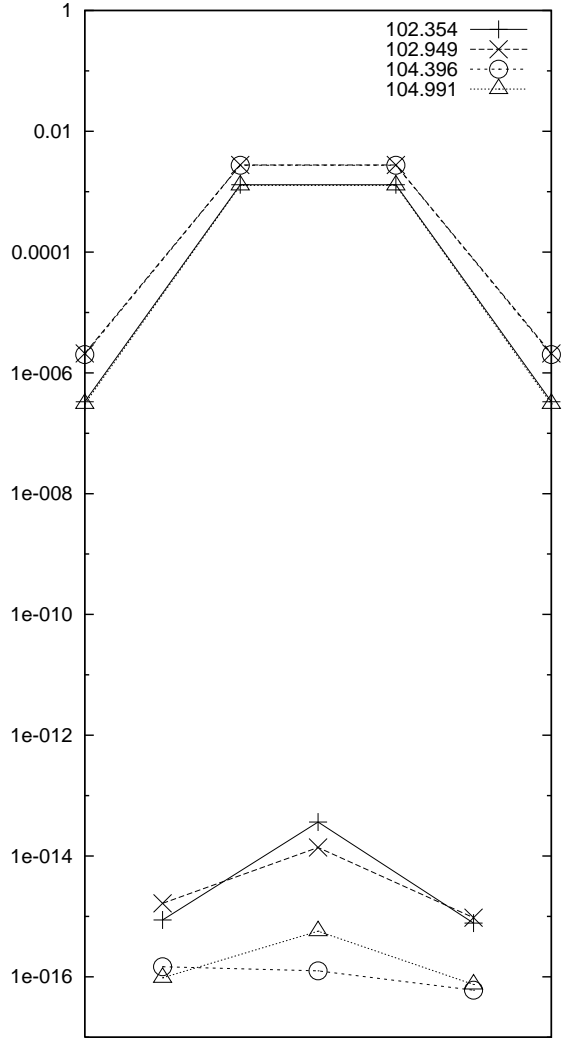}
  \caption{Values of coeficients $||b_j|^2-|a_j|^2|$ and $||d_j|^2-|c_j|^2|$, proportional to probability current}
  \label{fig4a}
\end{subfigure}%
\begin{subfigure}{.5\textwidth}
  \centering\captionsetup{width=.9\linewidth}
  \includegraphics[width=.99\linewidth]{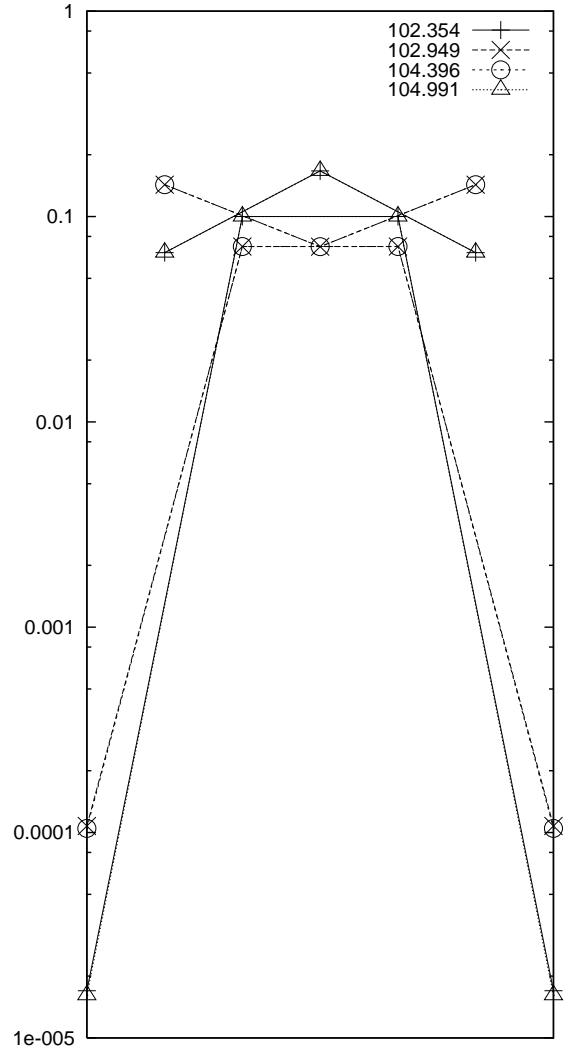}
  \caption{Values of coeficients $|a_j|^2+|b_j|^2$ and $|c_j|^2+|d_j|^2$}
  \label{fig4b}
\end{subfigure}
\caption{Rectangular lattice strip $N=3$: values of two combinations of the coefficients $a_j$, $b_j$, $c_j$, $d_j$. For the particular values of $k$ in the absolutely continuous spectrum shown in the caption (102.354, 102.949, 104.396, and 104.991) and $\theta = 0$ we plot the modulus of the difference $|b_j|^2-|a_j|^2$ (or $|d_j|^2-|c_j|^2$) proportional to the probability current (left panel) and the sum $|a_j|^2+|b_j|^2$ (or $|c_j|^2+|d_j|^2$) (right panel). The values are connected by lines indicating the value of $k$. The points are arranged according to the corresponding vertex positions in the strip, the three connected points correspond to the coefficients on the horizontal edges. }
\label{fig4}
\end{figure}

\subsection{The `brick' lattice}
\label{ss:brick}

In this case we choose an even smaller strip width, $N=2$, and put $\ell_1 = \ell_2 = 1$, $\ell_3 = \sqrt{2}$. Using \eqref{Ansatz2} we find the dispersion curves shown on the diagram in Figure~\ref{fig5} for the window ranging from $k = 100$ to $k =107$. As in the previous example, we ignore the almost flat bands that appear in the vicinity of $\frac{46\pi}{\sqrt{2}} \approx 102.186$, $\frac{47\pi}{\sqrt{2}}$, $32\pi \approx 100.531$, $33\pi\approx 103.673$, and $34\pi\approx 106.814$, and focus on the wide ones, specifically at the points corresponding $\theta = 0$ and $k\approx 101.133, 103.103, 104.046, 104.677$ marked by red crosses and plot the same coefficient combinations as before.

Figure~\ref{fig6} shows in the left panel $||b_j|^2-|a_j|^2|$ and the analogous expressions for the other edges and in the right panel $|a_j|^2+|b_j|^2$ and its analogues. The $x$-coordinate of the points from the left to right corresponds to the values of these expressions on the edges $g_1$, $h_1$, $f_1$, $e_1$, $g_2$, $h_2$, $f_2$, $e_2$, $g_3$, $h_3$. We see that the indicated expressions with indices 2 are three to four orders of magnitude smaller than the largest expressions (which are in the right panel). Similarly, the expressions with index 3 are six to eight orders smaller. Accordingly, the components of the normalized generalized eigenfunctions are roughly reduced by factor  $10^{-2}$ for the edges with index 2 and $10^{-4}$ for the edges with index 3, in agreement with Theorem~\ref{thm2}.

Note finally that the probability current characterized by the quantity $\frac{|a_1|^2-|b_1|^2}{|a_1|^2+|b_1|^2}$, cf. Remark~(c) above, is suppressed at the band edges, $\theta=0$, where the value is $\approx -0.02$ but it ranges roughly between 1 and -1 as $\theta$ runs through the Brillouin zone, cf.~Figure~\ref{fig7}. The figure is plotted for the band starting with $k\approx 101.133$, the behavior for the other bands is similar.

\begin{figure}
\centering
\includegraphics[height=10cm]{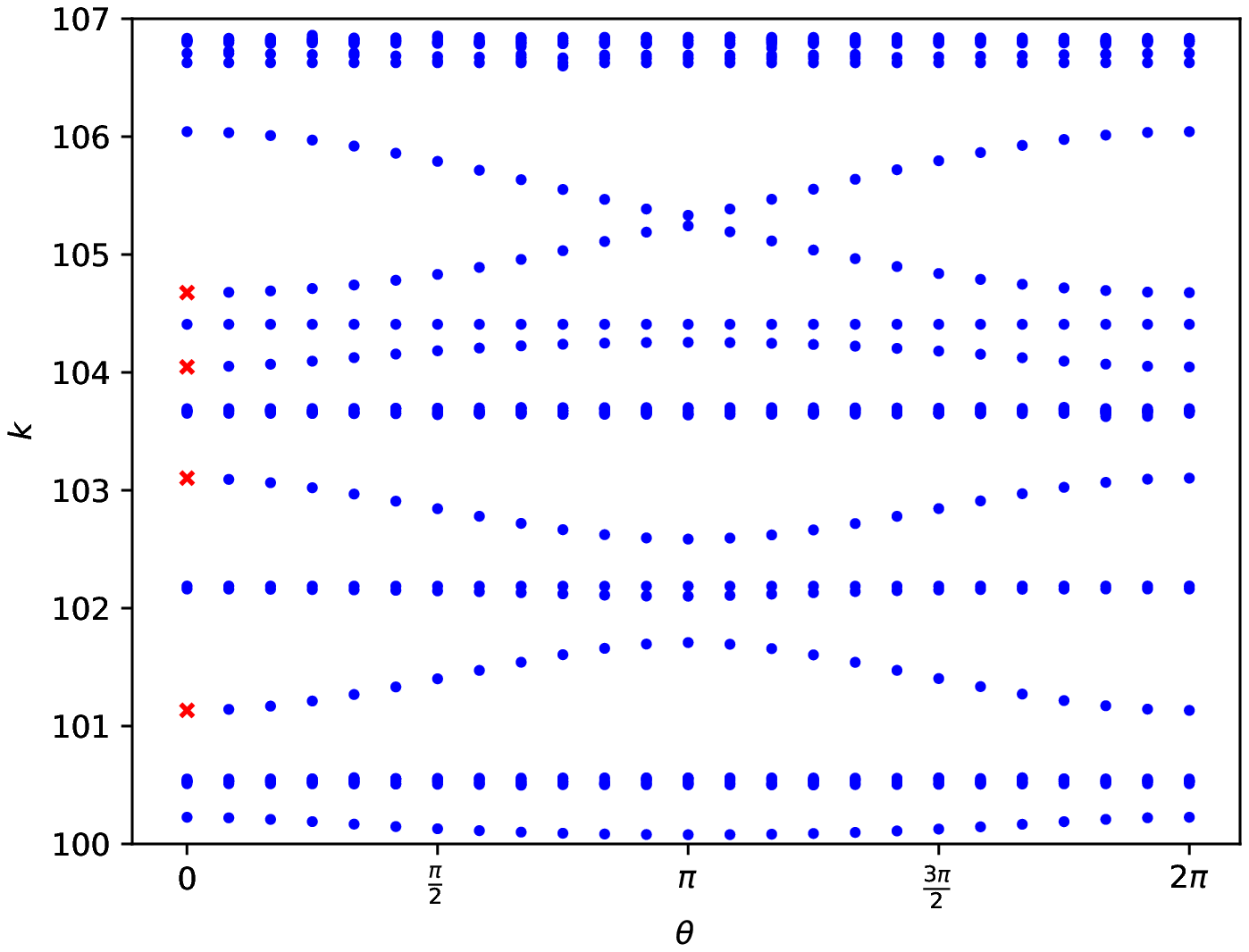}
\caption{Dispersion diagram for the `brick'' lattice strip, the points marked by red crossed are used in Figure~\ref{fig6} (color online)}
\label{fig5}
\end{figure}

\begin{figure}
\centering
\begin{subfigure}{.5\textwidth}
  \centering\captionsetup{width=.9\linewidth}
  \includegraphics[width=.99\linewidth]{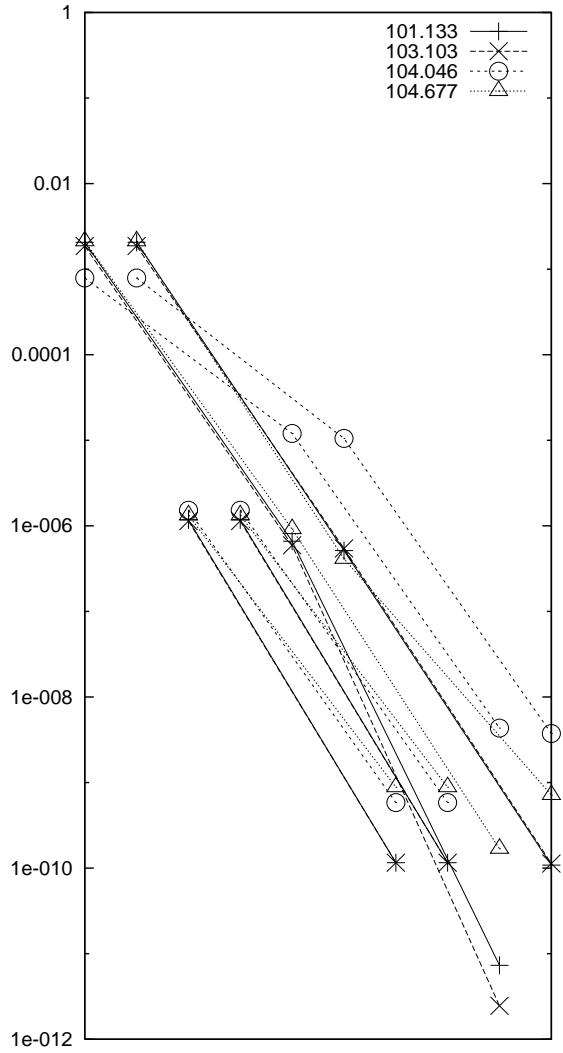}
  \caption{Values of coeficients $||b_j|^2-|a_j|^2|$, $||d_j|^2-|c_j|^2|$, $||r_j|^2-|p_j|^2|$, and $||t_j|^2-|s_j|^2|$, proportional to probability current}
  \label{fig6a}
\end{subfigure}%
\begin{subfigure}{.5\textwidth}
  \centering\captionsetup{width=.9\linewidth}
  \includegraphics[width=.99\linewidth]{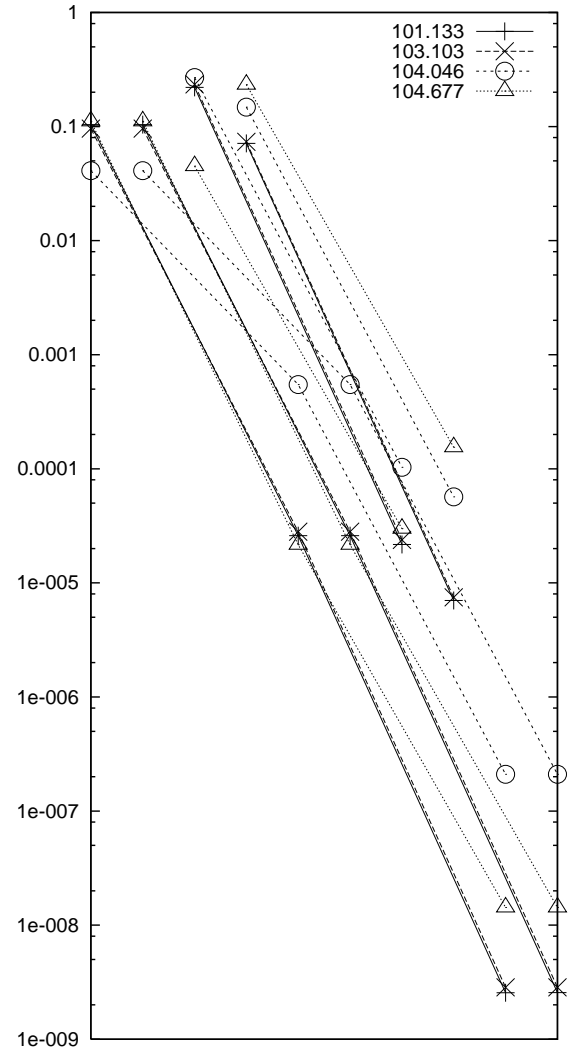}
  \caption{Values of coeficients $|a_j|^2+|b_j|^2$, $|c_j|^2+|d_j|^2$, $|p_j|^2+|r_j|^2$, $|s_j|^2+|t_j|^2$}
  \label{fig6b}
\end{subfigure}
\caption{`Brick' lattice strip with $N=2$: combinations of the coefficients $a_j$, $b_j$, $c_j$, $d_j$, $p_j$, $r_j$, $s_j$, and $t_j$. For the values of $k$ indicated in the caption (101.133, 103.103, 104.046, and 104.677) and $\theta = 0$ we plot the modulus of the difference $|b_j|^2-|a_j|^2$, etc. (left panel) and the sum $|a_j|^2+|b_j|^2$, etc. (right panel). The expressions made of $a_j$ and $b_j$ are connected by lines (similarly for the other coefficients). The leftmost point refers to the function $g_1$, it is connected to the to those of the other $g_j$. The second point from the left corresponds to function $h_1$  being connected to the other edges of the same type. The next point from the left corresponds to $f_1$, the other to $e_1$, etc.}
\label{fig6}
\end{figure}

\begin{figure}
\centering
\includegraphics[height=8.5cm]{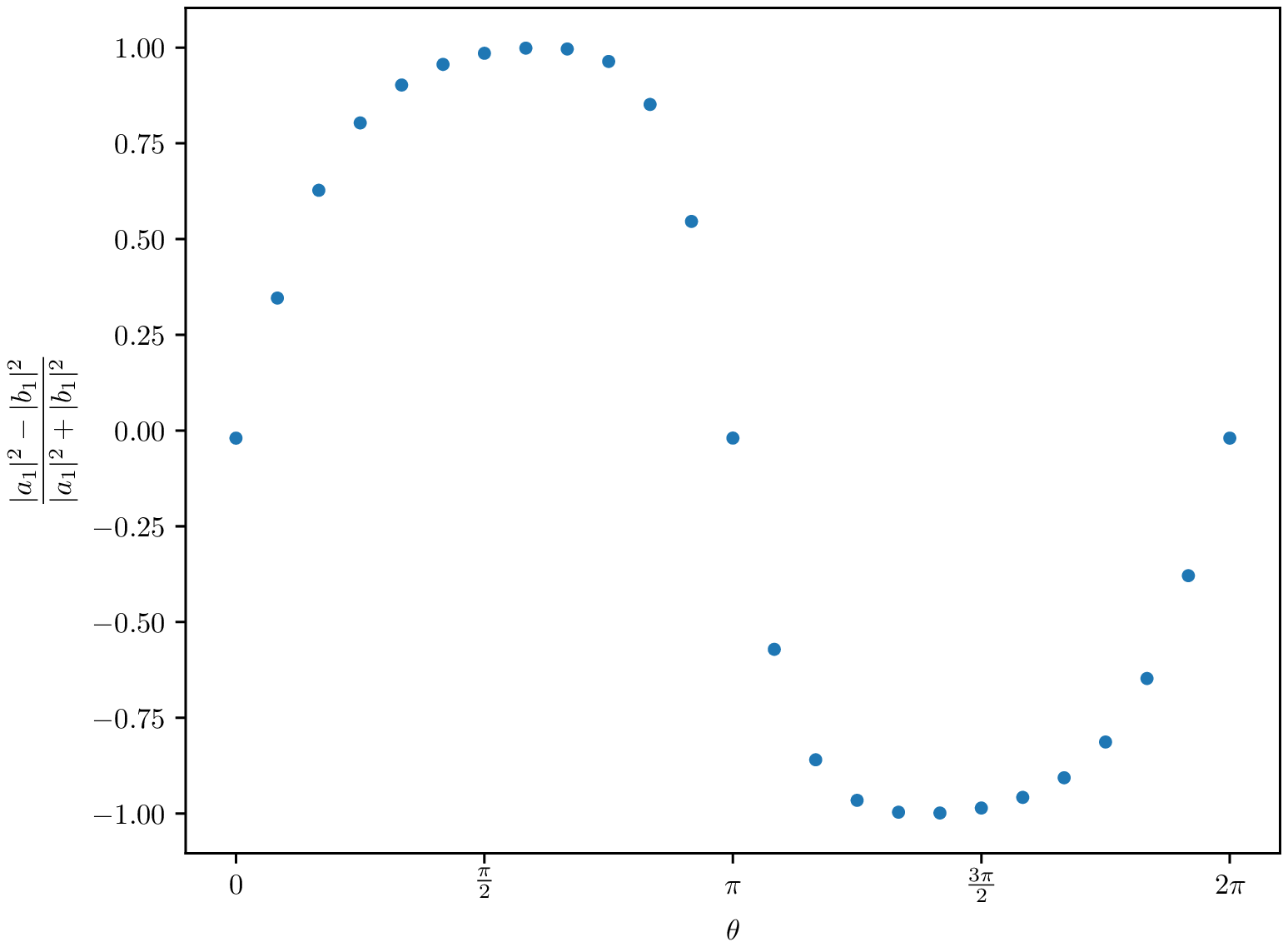}
\caption{The (relative) probability current along the left strip edge as a function of $\theta$ for the band starting with $k\approx 101.133$}
\label{fig7}
\end{figure}

\section*{Acknowledgements}
P.E. acknowledges support of the EU project CZ.02.1.01/0.0/0.0/16\textunderscore 019/0000778, and J.L. of the research programme ``Mathematical Physics and Differential Geometry'' of the Faculty of Science of the University of Hradec Kr\'alov\'e.

\end{document}